\newcommand{\newmax}{\operatornamewithlimits{max}}
\newcommand{\e}{\varepsilon}
\newtheorem{thm}{Theorem}
\newtheorem{lem}{Lemma} 
\newtheorem{definition}{Definition}
\begin{document}

\title{Distilling Non-Locality}

\author{Manuel Forster} \author{Severin Winkler} \author{Stefan Wolf}
\affiliation{Computer Science Department, ETH Z\"urich, CH-8092
  Z\"urich, Switzerland}

\date{\today}

\begin{abstract}

  Two parts of an entangled quantum state can have a correlation in
  their joint behavior under measurements that is unexplainable by
  shared classical information. Such correlations are called {\em
    non-local\/} and have proven to be an interesting resource for
  information processing. Since non-local correlations are more useful
  if they are stronger, it is natural to ask whether weak non-locality
  can be amplified. We give an affirmative answer by presenting the
  first protocol for distilling non-locality in the framework of
  generalized non-signaling theories. Our protocol works for both
  quantum and non-quantum correlations. This shows that in many
  contexts, the extent to which a single instance of a correlation can
  violate a CHSH inequality is not a good measure for the usefulness of
  non-locality. A more meaningful measure follows from our results.

\end{abstract}

\pacs{03.67.-a,03.65.Ud}

\maketitle

When two separated parts of a quantum state are measured in fixed bases,
then the outcomes can show a correlation. Whereas this may be surprising
from a physical point of view, it is not from the standpoint of
information: such correlations could be explained by randomness shared
when the two particles were generated.

If one considers, however, different possible measurement settings on
the two sides, then correlations of a stronger kind can arise, which are
{\em unexplainable by shared randomness only\/}~\cite{Bell-1964}: This
is {\em non-locality\/}.

Quantum mechanics is non-local but not maximally so. There are stronger
correlations still in accordance with the non-signaling postulate of
relativity~\cite{PR-1994}. This fact motivated the study of so-called
{\em generalized non-signaling theories\/}~\cite{barrett05,barret-2005}
in which quantum correlations are a special case. Following this general
approach to non-locality, we study correlations between the joint
behavior of the two ends of a bipartite input-output {\em system},
characterized by a conditional probability distribution $P(ab|xy)$. Let
$x$ and $a$ be the input and output on the left-hand side of the system,
and $y$ and $b$ the corresponding values on the right-hand side.
\\
\begin{figure}[h]
  \begin{center}
    \begin{pspicture}(7,1)
      \psline{-}(2,0)(5,0)(5,1)(2,1)(2,0)
      \rput(1.65,0.8){$x\rightarrow$} \rput(1.65,0.2){$a\leftarrow$}
      \rput(5.35,0.8){$\leftarrow y$} \rput(5.35,0.2){$\rightarrow b$}
      \rput(3.5,0.5){$P(ab|xy)$}
    \end{pspicture}
  \end{center}
  \label{bi-partite system1}
\end{figure}

\noindent
We call such a system {\em local\/} if it is explainable by shared
classical information. On the other hand, it is {\em signaling\/} if it
allows for message transmission in either direction.

John Bell has given properties that local systems have, namely certain
inequalities they must obey. Hence, {\em violation\/} of such an
inequality is a witness of {\em non\/}-locality. In the case where both
inputs and both outputs are {\em binary}, the only such inequality (up
to symmetries) is the so-called CHSH (after Clauser, Horne, Shimony,
Holt) inequality~\cite{CHSH-1969}. Furthermore, the set of eight CHSH
inequalities is complete for binary systems in the sense that if none of
them is violated, then the system is local.

In this letter we restrict ourselves to the state space of binary
input/binary output non-signaling systems. We refer to
\cite{barret-2005} for a detailed description of this set.

Non-local correlations are not only a fascinating phenomenon, but have
as well been shown to be an interesting resource for information
processing. Examples are device-independent secrecy of quantum
cryptography~\cite{Hardy05} and non-local
computation~\cite{Linden-2006}. Furthermore, the existence of
non-locality that is super-quantum to some extent would have dramatic
consequences on communication complexity~\cite{brassard-2006}. This
extends the fact that {\em maximal\/} non-locality would collapse
communication complexity, i.e., allows to compute every distributed
Boolean function with just one communicated bit~\cite{vandam-2005}.

The extent by which a Bell inequality, e.g., CHSH, is violated can be
taken as a measure for non-locality. Not surprisingly, non-locality is
a more useful resource, the stronger it is. For instance, the violation
of CHSH gives a lower bound to the uncertainty of a third party about
the output bits of a non-signaling system, which is better the stronger
the violation is.

Motivated by these facts, we study the problem of whether non-locality
can be amplified: Can stronger non-locality be obtained from a number of
weakly non-local systems? We consider protocols for non-locality
distillation executed by two parties having access to weakly non-local
systems. The parties on the two sides can carry out arbitrary operations
on their pieces of information, but they {\em cannot\/} communicate.

Note that such protocols should not be confused with protocols for {\em
  entanglement\/} distillation: There, the input and output are (weakly
and strongly, respectively) entangled quantum states, and the allowed
operations are classical communication and local quantum operations. The
existence of certain entanglement distillation protocols {\em without
  communication\/} is known~\cite{bennett-1996}, but this result is
independent of ours.

There are several known impossibility results on non-locality
distillation. First, it is not possible to create non-locality from
locality, i.e., to pass the Bell bound \cite{Bell-1964}. Second, there
exists no non-locality distillation which can pass the Tsirelson bound
\cite{cirelson-1980} if the non-local systems can be simulated by
quantum mechanics. Third, a simple inductive argument shows that a
system that exhibits the algebraically maximal possible CHSH violation
cannot be obtained from weaker ones. Fourth, it has been shown recently
that the CHSH violation of two copies of isotropic systems cannot be
distilled~\cite{short08}. And finally, it has been proven
in~\cite{dukaric-2008} that there exists an infinite number of isotropic
systems for which non-locality distillation cannot be achieved.


An open question which remains is whether non-locality can be distilled
at all. We answer this question affirmatively.
\medskip\\
\noindent {\bf Main Result.}
There exists a protocol which allows the distillation of certain, both
quantum-mechanically achievable and unachievable, binary non-local
systems.

\section{Definitions}

A binary input-output system characterized by a conditional probability
distribution $P(ab|xy)$ is \emph{non-signaling\/} if one cannot signal
from one side to the other by the choice of the input. This means that
the marginal probabilities $P(a|x)$ and $P(b|y)$ are independent of $y$
and $x$, respectively, i.e.,

\begin{align*}
  \sum_bP(ab|xy)&=\sum_bP(ab|xy')\equiv P(a|x)\ \forall a,x,y,y',\\
  \sum_aP(ab|xy)&=\sum_aP(ab|x'y)\equiv P(b|y)\ \forall b,x,x',y.
\end{align*}

When using a non-signaling system, a party receives its output
immediately after giving its input, independently of whether the other
has given its input already. This prevents the parties from signaling by
delaying their inputs.

If appropriate we represent a system by its probability distribution
$P(ab|xy)$ in matrix notation as
\[\left[
  \begin{array}{cccc}
    P(00|00)&P(01|00)&P(10|00)&P(11|00)\\
    P(00|01)&P(01|01)&P(10|01)&P(11|01)\\
    P(00|10)&P(01|10)&P(10|10)&P(11|10)\\
    P(00|11)&P(01|11)&P(10|11)&P(11|11)\\
  \end{array}\right].
\]

Given $P(ab|xy)$ ($P$) we define the set of four correlation functions:
\begin{align*}
  X_{xy}(P)=P(00|xy)+P(11|xy)-P(01|xy)-P(10|xy),
\end{align*}
for $xy=00,01,10,11$. The corresponding system is local if and only if
its correlation functions satisfy the following CHSH inequalities
\cite{CHSH-1969}:

\begin{align}
  |X_{xy}(P)+X_{x\bar{y}}(P)+X_{\bar{x}y}(P)-X_{\bar{x}\bar{y}}(P)|\leq
  2,\label{chsh}
\end{align}
for $xy=00,01,10,11$. (We use $\bar{x}$ and $\bar{y}$ to indicate bit
flips, that is, $\bar{0}=1$ and $\bar{1}=0$.)

In order to measure the non-locality of a system we will use the maximal
violation of a CHSH inequality:

\begin{definition}
  We define the CHSH non-locality of a binary input, binary output
  system $P$ as
  \[
  NL[P]:=\newmax_{xy}|X_{xy}(P)+X_{x\bar{y}}(P)+X_{\bar{x}y}(P)-X_{\bar{x}\bar{y}}(P)|,
  \]
  Note that $NL[P]>2$ indicates that the correlation $P$ violates CHSH
  and is therefore called non-local.
\end{definition}

Quantum mechanics predicts violations of the CHSH inequalities
(\ref{chsh}) up to $2\sqrt{2}$. However, this bound is only
necessary. The necessary \emph{and} sufficient condition for a set of
four numbers to be reached by quantum mechanics was found by Landau
\cite{landau1988} and Tsirelson \cite{Tsirelson93} (see also Masanes
\cite{masanes-2003}).

\begin{lem}\label{quant_def}
  \noindent A set of correlation functions $X_{xy}$, $xy=00,01,10,11$,
  can be reached by a quantum state and some local observables if and
  only if they satisfy the following four inequalities:
  \begin{align*}
    |\arcsin X_{xy}+\arcsin X_{x\bar{y}}+\arcsin X_{\bar{x}y}-\arcsin
    X_{\bar{x}\bar{y}}| \leq \pi.
  \end{align*}
\end{lem}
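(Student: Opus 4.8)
The plan is to reduce the statement to a purely geometric question via Tsirelson's vector representation of quantum correlations, and then to characterize the feasible geometries by an angle-sum condition that turns out to be exactly the claimed $\arcsin$ inequality. First I would invoke Tsirelson's theorem~\cite{Tsirelson93}: a tuple $(X_{00},X_{01},X_{10},X_{11})$ is attainable by a quantum state together with $\pm1$-valued local observables if and only if there exist unit vectors $\vec a_0,\vec a_1,\vec b_0,\vec b_1$ in some real inner-product space with $X_{xy}=\langle \vec a_x,\vec b_y\rangle$. The forward direction extracts the vectors from the observables acting on the state, while the converse builds observables from the vectors using a Clifford/Pauli representation. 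This replaces the operator problem by the question: for which cross inner products do four unit vectors exist?

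Writing $\alpha_{xy}:=\arccos X_{xy}\in[0,\pi]$ for the angle between $\vec a_x$ and $\vec b_y$, and using $\arcsin X_{xy}=\tfrac{\pi}{2}-\alpha_{xy}$, I would substitute into the four inequalities. The inequality indexed by $xy$ then collapses to the bound $\alpha_{\bar x\bar y}\le \alpha_{xy}+\alpha_{x\bar y}+\alpha_{\bar x y}$ (together with a complementary wrap-around bound), so that as $xy$ ranges over its four values the system simply asserts that \emph{each} cross-angle is at most the sum of the other three, e.g.\ $\alpha_{00}\le \alpha_{01}+\alpha_{10}+\alpha_{11}$ and its cyclic variants.

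For sufficiency I would give an explicit planar construction. Placing all four vectors on a unit circle, with $\vec a_x$ at angle $\theta_x$ and $\vec b_y$ at angle $\phi_y$, gives $X_{xy}=\cos(\theta_x-\phi_y)$, so it suffices to choose the four differences $\theta_x-\phi_y=\pm\alpha_{xy}$ consistently. The algebraic identity $(\theta_0-\phi_0)-(\theta_0-\phi_1)-(\theta_1-\phi_0)+(\theta_1-\phi_1)=0$ forces a closing condition $\pm\alpha_{00}\mp\alpha_{01}\mp\alpha_{10}\pm\alpha_{11}\equiv 0 \pmod{2\pi}$ on the signed angles, and the angle-sum inequalities guarantee that a feasible choice of signs exists. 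Lifting this planar (rebit) solution back through Tsirelson's construction yields an explicit quantum realization.

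For necessity I would argue from an arbitrary vector realization. Angular (geodesic) distance on the unit sphere is a metric, so the triangle inequality applied to chains such as $\vec a_0\to\vec b_1\to\vec a_1\to\vec b_0$ bounds each cross-angle by sums of neighboring angles; eliminating the two free internal angles $\angle(\vec a_0,\vec a_1)$ and $\angle(\vec b_0,\vec b_1)$ must return precisely the four cross-angle inequalities. Equivalently, one asks when the $4\times4$ Gram matrix of the four vectors, whose only free entries are $\langle\vec a_0,\vec a_1\rangle$ and $\langle\vec b_0,\vec b_1\rangle$, can be made positive semidefinite. I expect this elimination to be the main obstacle: showing that the positive-semidefiniteness constraint, optimized over the two free angles, collapses \emph{exactly} to the four $\arcsin$ inequalities (and that the extremal configuration is planar) is where the specific $\arcsin$ form, rather than some cruder bound, has to be extracted.
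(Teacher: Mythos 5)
First, a point of reference: the paper contains no proof of this lemma --- it is imported verbatim from Landau~\cite{landau1988} and Tsirelson~\cite{Tsirelson93} (see also Masanes~\cite{masanes-2003}) --- so your attempt can only be measured against those standard proofs. Your skeleton (Tsirelson's vector characterization $X_{xy}=\langle \vec a_x,\vec b_y\rangle$, then translation to angles $\alpha_{xy}=\arccos X_{xy}$, under which the four $\arcsin$ inequalities become ``each cross-angle at most the sum of the other three'' plus the wrap-around bounds $\alpha_{xy}+\alpha_{x\bar y}+\alpha_{\bar x y}-\alpha_{\bar x\bar y}\leq 2\pi$) is exactly the right reduction and matches the literature. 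The necessity direction is in fact easier than you make it: the chain $\vec a_0\to\vec b_1\to\vec a_1\to\vec b_0$ involves \emph{only} cross-angles, giving $\alpha_{00}\leq\alpha_{01}+\alpha_{11}+\alpha_{10}$ directly with no internal angles $\angle(\vec a_0,\vec a_1)$, $\angle(\vec b_0,\vec b_1)$ to eliminate; and the wrap-around bounds follow by replacing $\vec b_y$ with $-\vec b_y$ (i.e., $\alpha_{xy}\mapsto\pi-\alpha_{xy}$) and reapplying the same chain inequality. No Gram-matrix elimination is needed for this direction.

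The genuine gap is in your sufficiency argument: the claim that the angle-sum inequalities guarantee a sign choice closing the planar cycle $\pm\alpha_{00}\mp\alpha_{01}\mp\alpha_{10}\pm\alpha_{11}\equiv 0\pmod{2\pi}$ is false. Take $(X_{00},X_{01},X_{10},X_{11})=(0,0,0,1)$, i.e., $(\alpha_{00},\alpha_{01},\alpha_{10},\alpha_{11})=(\pi/2,\pi/2,\pi/2,0)$: all eight scalar inequalities hold, but every signed sum $\pm\pi/2\pm\pi/2\pm\pi/2\pm 0$ is an odd multiple of $\pi/2$ and never lies in $2\pi\mathbb{Z}$, so no planar configuration exists --- yet the point is plainly quantum, realized by three orthonormal vectors in $\mathbb{R}^3$ via $\vec a_1=\vec b_1=e_1$, $\vec b_0=e_2$, $\vec a_0=e_3$. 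Planar (rebit) configurations only cover the \emph{boundary} of the quantum body, where one inequality is tight; interior points require either genuinely three-dimensional vector configurations or the convexity route used by Masanes~\cite{masanes-2003}, namely showing that every point satisfying the strict inequalities is a convex combination of extremal (planar) quantum points and local deterministic points, and invoking convexity of the quantum set. Your alternative Gram-matrix route (optimize positive semidefiniteness over the two free entries $\langle\vec a_0,\vec a_1\rangle$, $\langle\vec b_0,\vec b_1\rangle$) is essentially Landau's original computation~\cite{landau1988}, and you correctly flag it as the crux --- but flagging it is not discharging it, so as written the ``if'' direction of the lemma remains unproven.
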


Using the terms introduced above we formally define a non-locality
distillation protocol as follows:

\begin{definition}
  A non-locality-distillation protocol is executed by two parties (Alice
  and Bob) without communication. It simulates a binary input/binary
  output system $P^n$ by classical (local) operations on $n$ non-local
  resource systems $P$, such that $NL[P^n]>NL[P]>2$.
\end{definition}

\section{Results}

In the following we present a non-locality-distillation protocol and
distillable non-local resource systems. We will also present resource
systems that are measurable on a quantum state and can be used by our
protocol to distill (quantum) non-locality.

We define the protocol $\text{NDP}_n(P)$ on $n$ non-signaling systems
$P$ between Alice and Bob as follows: On inputs $x$ to Alice and $y$ to
Bob the parties input $x$ and $y$ to all $n$ systems in parallel and
receive outputs $(a_1,\dots,a_n)$ and $(b_1,\dots,b_n)$,
respectively. The parties then locally compute their output bits as
$a=\sum_{i=1}^na_i$ (mod 2) for Alice and $b=\sum_{i=1}^nb_i$ (mod 2)
for Bob. The whole protocol is illustrated in more detail in Figure
\ref{dist prot}.

\begin{figure}[ht]
  \begin{center}
    \begin{tabular}{lcr}
      ALICE&\hspace{-0.5cm}$\text{NDP}_n(P)$&\hspace{-0.2cm}BOB\\\hline\hline
      $x\in\{0,1\}$&\hspace{-0.2cm}inputs&\hspace{-0.2cm}$y\in\{0,1\}$\\
      &&\\
      $a_1$&\hspace{-0.2cm}$P(a_1b_1|xy)$&\hspace{-0.2cm}$b_1$\\
      $a_2$&\hspace{-0.2cm}$P(a_2b_2|xy)$&\hspace{-0.2cm}$b_2$\\
      $\vdots$&\hspace{-0.2cm}$\vdots$&\hspace{-0.2cm}$\vdots$\\
      $a_n$&\hspace{-0.2cm}$P(a_nb_n|xy)$&\hspace{-0.2cm}$b_n$\\
      &&\\
      $a=\sum_{i=1}^na_i\mod 2$&\hspace{-0.2cm}outputs&\hspace{-0.2cm}$b=\sum_{i=1}^nb_i\mod 2$
    \end{tabular}
  \end{center}
  \caption{The final outputs are a simple exclusive-or of all the
    outputs obtained from a parallel usage of the available non-local
    resource systems.\label{dist prot}}
\end{figure}

For $0<\e\leq 1$ we define the following non-signaling system
\[
P_\varepsilon=\left[
  \begin{array}{cccc}
    1/2&0&0&1/2\\
    1/2&0&0&1/2\\
    1/2&0&0&1/2\\
    1/2-\varepsilon/2&\varepsilon/2&\varepsilon/2&1/2-\varepsilon/2
  \end{array}\right]
\]
as our non-local distillation resource with CHSH non-locality
$NL[P_\e]=3-(1-2\e)>2$. With probability $\e$ this system behaves like a
PR-box~\cite{PR-1994} and with probability $1-\e$ it outputs perfectly
correlated random bits.

\begin{thm}\label{thm1}
  For $n>1$ and $0<\varepsilon<1/2$ the protocol $\text{NDP}_n(P_\e)$ is
  a non-locality-distillation protocol.
\end{thm}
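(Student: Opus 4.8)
The plan is to track a single scalar quantity through the protocol: the correlation function $X_{xy}$. The crucial observation is that $X_{xy}(P)=\sum_{a,b}(-1)^{a+b}P(ab|xy)$, i.e.\ $X_{xy}(P)$ is exactly the expectation $\mathbb{E}[(-1)^{a+b}\mid xy]$ of the parity of the two outputs. Since the protocol $\text{NDP}_n(P_\e)$ feeds the same inputs $x,y$ into $n$ independent copies of $P_\e$ and outputs $a=\sum_i a_i$ and $b=\sum_i b_i$ modulo $2$, we have $(-1)^{a+b}=\prod_{i=1}^n(-1)^{a_i+b_i}$. Independence of the $n$ boxes then makes the expectation factorize, giving the multiplicative law
\[
X_{xy}(P^n)=\bigl(X_{xy}(P_\e)\bigr)^n .
\]
I regard establishing this identity as the conceptual heart of the argument; everything afterwards is arithmetic.

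Next I would compute the four correlation functions of the resource. A direct evaluation from the matrix of $P_\e$ yields $X_{00}=X_{01}=X_{10}=1$ and $X_{11}=1-2\e$. Writing $t:=(1-2\e)^n$, the multiplicative law gives $X_{00}(P^n)=X_{01}(P^n)=X_{10}(P^n)=1$ and $X_{11}(P^n)=t$, where $0<t<1$ for $0<\e<1/2$. Substituting these into the four CHSH expressions of the Definition, the setting $xy=00$ produces $|1+1+1-t|=3-t$, whereas each of the other three settings produces $1+t$. Since $t<1$ implies $3-t>1+t$, the maximum is attained at $xy=00$, so that
\[
NL[P^n]=3-(1-2\e)^n .
\]

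It then remains only to compare this with $NL[P_\e]=3-(1-2\e)$, which is the $n=1$ case of the same formula. For $0<\e<1/2$ the base satisfies $0<1-2\e<1$, so $(1-2\e)^n<(1-2\e)$ whenever $n>1$; hence $NL[P^n]=3-(1-2\e)^n>3-(1-2\e)=NL[P_\e]>2$, which is precisely the distillation condition. As a final point I would remark that $P^n$ is a legitimate non-signaling system: it is obtained from non-signaling boxes by local, communication-free classical post-processing (parity of the local outputs), and such wirings preserve the non-signaling marginals. I do not anticipate a genuine obstacle beyond the factorization step; the only care needed is verifying that the decreasing quantity $X_{11}$ sits in the subtracted position of the $xy=00$ inequality, so that shrinking it \emph{increases} rather than decreases the CHSH value.
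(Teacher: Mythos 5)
Your proof is correct and arrives at the same formula $NL[P_\varepsilon^n]=3-(1-2\varepsilon)^n$ as the paper, but the key step is derived by a genuinely different (and cleaner) route. The paper proves the decay of $X_{11}$ by an explicit recursion: it expands the probabilities $P_\varepsilon^n(ab|11)$ in terms of those of $P_\varepsilon^{n-1}$, reads off $X_{11}(P_\varepsilon^n)=(1-2\varepsilon)\,X_{11}(P_\varepsilon^{n-1})$, and iterates, while the constancy of $X_{00},X_{01},X_{10}$ at the value $1$ is asserted separately. You instead note $X_{xy}(P)=\mathbb{E}[(-1)^{a+b}\mid xy]$ and that the XOR wiring makes the output parity multiply across independent boxes, giving $X_{xy}(P^n)=(X_{xy}(P))^n$ for all four settings at once; this single identity subsumes both the paper's recursion (its $n$-fold iteration for $xy=11$) and the claim about the other three settings (since $1^n=1$). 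Your version buys generality: it makes immediate the paper's later remark that distillability under this protocol depends only on the correlation functions of the resource, and it is exactly the identity behind Theorem \ref{thm2}'s formula $3(1-2\delta)^n-(1-2\varepsilon)^n$. You are also more careful on a point the paper treats tersely: you verify that the $xy=00$ combination actually attains the maximum in the definition of $NL$, by computing the other three CHSH expressions ($1+t$ versus $3-t$ with $t=(1-2\varepsilon)^n\in(0,1)$). The one assumption you share implicitly with the paper and should state is that the $n$ resource systems are independent, i.e., the joint behavior is $\prod_{i=1}^n P(a_ib_i|xy)$; this is what licenses the factorization of the expectation (and, in the paper's version, the conditioning step of the recursion).
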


\begin{proof}[Proof of Theorem \ref{thm1}]
  Obviously, $\text{NDP}_n(P_\e)$ describes only classical, local
  operations on Alice's and Bob's side. Furthermore,
  $\text{NDP}_n(P_\e)$ simulates another binary input/binary output
  system $P_\e^n$ with CHSH non-locality
  \begin{align*}
    NL[P_\e^n]&
    =X_{00}(P_\e^n)+X_{01}(P_\e^n)+X_{10}(P_\e^n)-X_{11}(P_\e^n)\\
    &=3-X_{11}(P_\e^n)\\
    &=3-(P_\e^n(00|11)+P_\e^n(11|11)\\
    &~~~~~~~~~~-P_\e^n(01|11)-P_\e^n(10|11)).
  \end{align*}
  Here, we used that $X_{00}(P_\e^n),X_{01}(P_\e^n),X_{10}(P_\e^n)$ are
  constant functions reaching the algebraic maximum of 1. Analogously to
  $P_\e^n$, let $P_\e^{n-1}$ denote the system simulated by
  $\text{NDP}_{n-1}(P_\e)$. Using
  \begin{align*}
    P_\e^n(00|11)&=P_\e^n(11|11)\\
    &=(1/2-\e/2)(P_\e^{n-1}(00|11)+P_\e^{n-1}(11|11))\\
    &~~~+\e/2(P_\e^{n-1}(01|11)+P_\e^{n-1}(10|11)),
  \end{align*}
  \begin{align*}
    P_\e^n(01|11)&=P_\e^n(10|11)\\
    &=\e/2(P_\e^{n-1}(00|11)+P_\e^{n-1}(11|11))\\
    &~~~+(1/2-\e/2)(P_\e^{n-1}(01|11)+P_\e^{n-1}(10|11))
  \end{align*}
  we derive
  \begin{align*}
    NL[P_\e^n]&=3-(1-2\e)(P_\e^{n-1}(00|11)+P_\e^{n-1}(11|11)\\
    &~~~-P_\e^{n-1}(01|11)-P_\e^{n-1}(10|11))
    \\
    &=3-(1-2\e)X_{11}(P_\e^{n-1}).
  \end{align*}
  Therefore, we have established
  \begin{align*}
    NL[P_\e^n]&=3-X_{11}(P_\e^n)=3-(1-2\e)X_{11}(P_\e^{n-1})\\
    &=3-(1-2\e)^{n-1}X_{11}(P_\e)=3-(1-2\e)^n.
  \end{align*}
  For $0<\e<1/2$ we can guarantee $3-(1-2\e)^n>3-(1-2\e)^{n-1}$, which
  implies $NL[P_\e^n]>NL[P_\e]$.\end{proof}

In the limit we have
$\lim_{n\rightarrow\infty}NL[P_\e^n]=\lim_{n\rightarrow\infty}3-(1-2\e)^n=3$.

Note that the presented systems are not quantum-physically
realizable. This allows our protocol to pass the Tsirelson bound using
$P_\e$ with $0<\e\leq \sqrt{2}-1$ as resource systems. In the following
we show that non-locality distillation is also possible for systems
available in quantum mechanics. We therefore introduce a more general
parameterized system (positivity is ensured by $0\leq\e,\delta\leq 1$):
\[P_{\varepsilon,\delta}=\left[
  \begin{array}{cccc}
    1/2-\delta/2&\delta/2&\delta/2&1/2-\delta/2\\
    1/2-\delta/2&\delta/2&\delta/2&1/2-\delta/2\\
    1/2-\delta/2&\delta/2&\delta/2&1/2-\delta/2\\
    1/2-\varepsilon/2&\varepsilon/2&\varepsilon/2&1/2-\varepsilon/2
  \end{array}\right]
\]
This system has CHSH non-locality $3(1-2\delta)-(1-2\e)$. For $\delta=0$
we have $P_{\e,\delta}=P_\e$.

Note that we have chosen the two example resource systems because of
their simplicity. This should not suggest that these exact systems are
the only systems distillable by our protocol. Obviously the
distillability of a system with the presented protocol does only depend on its
correlation functions and not on the marginals.

\begin{thm}\label{thm2}
  There exist $0<\delta<\e<1/2$ and $n>1$ such that
  $P_{\varepsilon,\delta}$ is a quantum system and
  $\text{NDP}_n(P_{\varepsilon,\delta})$ is a non-locality-distillation
  protocol.
\end{thm}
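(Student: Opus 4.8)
The plan is to compute the CHSH non-locality of the distilled system $P_{\varepsilon,\delta}^n$ produced by $\text{NDP}_n(P_{\varepsilon,\delta})$, identify the regime of parameters in which distillation occurs, and then intersect that regime with the quantum-realizability condition of Lemma~\ref{quant_def}, exhibiting an explicit point inside the intersection.

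First I would observe, as remarked just before the statement, that distillability under $\text{NDP}_n$ depends only on the correlation functions, not on the marginals. For $P_{\varepsilon,\delta}$ the four correlation functions are $X_{00}=X_{01}=X_{10}=1-2\delta$ and $X_{11}=1-2\varepsilon$, so the XOR-protocol acts on each correlator multiplicatively under parallel composition: since the outputs are summed mod~$2$, the correlator of the XOR of two independent systems is the product of their correlators. Hence $X_{xy}(P_{\varepsilon,\delta}^n)=X_{xy}(P_{\varepsilon,\delta})^n$, giving
\begin{align*}
  X_{00}(P_{\varepsilon,\delta}^n)&=(1-2\delta)^n, \quad X_{11}(P_{\varepsilon,\delta}^n)=(1-2\varepsilon)^n,
\end{align*}
and therefore $NL[P_{\varepsilon,\delta}^n]=3(1-2\delta)^n-(1-2\varepsilon)^n$. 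This parallels the recursion in the proof of Theorem~\ref{thm1} and reduces to it at $\delta=0$.

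Next I would extract the distillation condition $NL[P_{\varepsilon,\delta}^n]>NL[P_{\varepsilon,\delta}]$, i.e.
\begin{align*}
  3(1-2\delta)^n-(1-2\varepsilon)^n > 3(1-2\delta)-(1-2\varepsilon).
\end{align*}
For small $\delta$ the positive term decays only slowly while the subtracted term $(1-2\varepsilon)^n$ shrinks substantially, so for $\delta$ sufficiently small relative to $\varepsilon$ (and $n$ chosen appropriately) the inequality holds; I would verify this by treating $\delta$ as a perturbation of the already-established $\delta=0$ case of Theorem~\ref{thm1}, where the strict inequality holds with slack. The quantum constraint from Lemma~\ref{quant_def}, applied to the correlators $1-2\delta,1-2\delta,1-2\delta,1-2\varepsilon$, reads
\begin{align*}
  3\arcsin(1-2\delta)-\arcsin(1-2\varepsilon)\leq\pi,
\end{align*}
which at $\delta=0$ becomes $\tfrac{3\pi}{2}-\arcsin(1-2\varepsilon)\le\pi$, i.e. $\arcsin(1-2\varepsilon)\ge\tfrac{\pi}{2}$, forcing $\varepsilon\le 0$ and thus excluding $P_\varepsilon$ itself (consistent with the remark that the $P_\varepsilon$ are non-quantum). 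The role of $\delta>0$ is precisely to relax this: increasing $\delta$ decreases $\arcsin(1-2\delta)$, opening room for $\varepsilon>0$ while keeping the quantum inequality satisfied.

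The main obstacle, and the crux of the argument, is showing that the distillation region and the quantum region overlap in a nonempty open set with $0<\delta<\varepsilon<1/2$: the two requirements pull $\delta$ in opposite directions, since distillation wants $\delta$ small (to preserve the factor $3(1-2\delta)^n$) whereas quantum realizability wants $\delta$ large enough to satisfy the $\arcsin$ bound. I would resolve this by a continuity/compactness argument: pick a base point where the quantum inequality holds strictly, show the distillation inequality is an open condition that holds on a neighborhood for suitable fixed $n$, and conclude the intersection is nonempty — then pin down one explicit admissible triple $(\varepsilon,\delta,n)$ by direct substitution to make the existence claim concrete.
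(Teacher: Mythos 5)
Your proposal follows essentially the same route as the paper's proof: derive $NL[P_{\varepsilon,\delta}^n]=3(1-2\delta)^n-(1-2\varepsilon)^n$ (you via multiplicativity of correlators under XOR of independent systems, the paper via the recursion of Theorem \ref{thm1} --- the same computation), intersect the distillation condition with the constraint of Lemma \ref{quant_def}, and exhibit an explicit admissible point; the paper simply names one, $n=2$, $\varepsilon=0.01$, $\delta=0.002$, which indeed lies inside the window your small-parameter analysis would produce (for $n=2$ the two conditions reduce near the origin to roughly $\varepsilon/9\lesssim\delta\lesssim\varepsilon/3$, and $\delta=\varepsilon/5$ works). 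Your omission of the paper's second quantum constraint $|\arcsin(1-2\delta)+\arcsin(1-2\varepsilon)|\leq\pi$ is harmless, since each $\arcsin$ lies in $[-\pi/2,\pi/2]$ and that inequality holds automatically; the one you keep is the binding one.

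There is, however, one genuinely missing step, to which the paper devotes an explicit paragraph. Lemma \ref{quant_def} asserts only that the four \emph{correlation functions} are achievable by a quantum state and local observables; it does not by itself imply that the full distribution $P_{\varepsilon,\delta}$ --- correlators \emph{and} marginals --- is a quantum system, which is literally what Theorem \ref{thm2} claims. Your plan treats the Lemma \ref{quant_def} condition as equivalent to quantum realizability of $P_{\varepsilon,\delta}$, and your remark that distillability depends only on correlators covers the protocol side but not this realizability side. The paper closes the gap by noting that Alice and Bob can locally symmetrize any quantum realization with shared randomness (e.g., XOR both outputs with the same uniform shared bit), which preserves all correlators while forcing uniform marginals; since a binary-input/binary-output distribution with uniform marginals is completely determined by its correlators, and $P_{\varepsilon,\delta}$ has uniform marginals, the symmetrized quantum system equals $P_{\varepsilon,\delta}$ exactly. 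Without this (short but necessary) argument, the claim that $P_{\varepsilon,\delta}$ itself is quantum remains unproved in your outline.
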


\begin{proof}[Proof of Theorem \ref{thm2}]
  Protocol $\text{NDP}_n(P_{\e,\delta})$ simulates another two input/two
  output system $P_{\e,\delta}^n$. By setting $\delta<\e$ and following
  a similar reasoning as in the proof of Theorem \ref{thm1} we obtain
  \begin{align*}
    NL[P_{\e,\delta}^n]&=X_{00}(P_{\e,\delta}^n)+X_{01}(P_{\e,\delta}^n)+X_{10}(P_{\e,\delta}^n)-X_{11}(P_{\e,\delta}^n)
    \\
    &=3(1-2\delta)^n-(1-2\e)^n.
  \end{align*}
  We can find values $n$ and $0<\delta<\e<1/2$ (for example,
  $n=2,\e=0.01,\delta=0.002$) such that $P_{\e,\delta}$ is at the same
  time distillable, i.e.,
  \[3(1-2\delta)^n-(1-2\e)^n>3(1-2\delta)-(1-2\e)\] and a quantum
  system, i.e.,
  \[
  \begin{array}{l}
    |3\arcsin (1-2\delta)-\arcsin (1-2\e)| \leq \pi,\\
    |\arcsin (1-2\delta)+\arcsin (1-2\e)| \leq \pi.
  \end{array}
  \]
  Lemma \ref{quant_def} only guarantees that the correlation functions
  of $P_{\e,\delta}$ are obtainable by quantum mechanics. But Alice and
  Bob can make their outputs locally uniform such that the correlation
  functions are preserved using shared randomness. Thus $P_{\e,\delta}$
  is a quantum system if its correlation functions are obtainable by
  quantum mechanics.

  Therefore, we can achieve $NL[P_{\e,\delta}^n]>NL[P_{\e,\delta}]$,
  which means that non-locality has been distilled with quantum systems
  as resources.
\end{proof}

\noindent A natural follow up question concerns the \emph{maximum}
non-locality our protocol can distill using the quantum systems
presented above.

Optimal parameters $n,\e,\delta$ maximize the term
$NL[P_{\e,\delta}^n]=3(1-2\delta)^n-(1-2\e)^n$ with respect to the
conditions that $NL[P_{\e,\delta}^n]>NL[P_{\e,\delta}]$ and that
$P_{\e,\delta}$ is a quantum system (Lemma \ref{quant_def}). The maximal
non-locality that can be distilled by $\text{NDP}_n(P_{\e,\delta})$ is
\[NL[P_{\e_{\text{max}},\delta_{\text{max}}}^{n_\text{max}}]=1+\sqrt{2},\]
where $n_{\text{max}}=2$, $\e_{\text{max}}\simeq 0.30866\text{ and
}\delta_{\text{max}}\simeq 0.03806$.

\section{A new measure of non-locality}

The possibility of distillation motivates the definition of a new
measure for non-locality, namely the maximal CHSH violation {\em
  achievable from many realizations of a given system by any
  distillation protocol}.

As an example application consider the computation of the non-locally
distributed version of the AND function: Two separated parties are given
inputs $x_1,x_2$ and $y_1,y_2$, respectively and have to find outputs
$a$ and $b$, such that the probability of obtaining
\begin{align}
  a\oplus b=(x_1\oplus y_1)\wedge(x_2\oplus y_2)\label{and}
\end{align}
is maximal. Quantum mechanics allows no advantage over the optimal,
classical strategy~\cite{Linden-2006}. Rearranging (\ref{and}) yields a
strategy with success probability directly related to the CHSH violation
of a given resource system. By non-locality distillation of copies of
our arbitrarily weak non-local system $P_\e$ a higher success
probability above the quantum bound can be reached. This illustrates
that distillable systems like $P_\e$ -- although located arbitrarily
``close'' to the quantum bound -- are a stronger computational resource
than any quantum system. Therefore, we obtain a separation of quantum
and post-quantum correlations below the Tsirelson bound in terms of
information processing power.

\section{Conclusion}

We have shown that non-locality of binary-input binary-output systems,
measured by how strongly the CHSH inequality is violated, can be
amplified. More precisely, we have shown that certain systems which
violate CHSH arbitrarily weakly (achieving the value $2+2\varepsilon$),
but that are nevertheless {\em not\/} realizable by quantum physics, can
be distilled.

Furthermore, we show that even certain quantum-mechanically achievable
systems can be distilled: Interestingly, the achievable limit by our
protocol is then the exact mean ($1+\sqrt{2}$) between the classical
($2$) and the quantum ($2\sqrt{2}$) bounds.

Our result complements previous ones, stating that the distillability of
non-locality of two {\em isotropic\/} systems is
impossible~\cite{short08} and at most very limited in
general~\cite{dukaric-2008}. Isotropic systems are an important special
case because they are the worst case with respect to distillability,
i.e., every non-signaling system can be turned into an isotropic system
such that non-locality is preserved using shared randomness only (this
transformation is known as
\emph{depolarization}~\cite{masanes-2005}). Therefore, these
non-distillable isotropic systems cannot be used to simulate the
distillable resources defined here. In other words, bipartite isotropic
and non-isotropic non-signaling (and quantum) systems are in general
inequivalent correlations, although they exhibit the same violation of
CHSH.

The possibility of distillation motivates the definition of a new
measure for non-locality. Clearly, this measure is significant in any
context where non-locality is used as a resource for information
processing, and where the number of realizations available is not
limited to one.

\begin{acknowledgments}
  We thank Dejan D. Dukaric and Esther H\"anggi for useful
  discussions. This work was funded by the Swiss National Science
  Foundation (SNSF).
\end{acknowledgments}

\end{document}